\Crefname{assumption}{Assumption}{Assumptions}
\theoremstyle{plain}
\newtheorem{theorem}{Theorem}
\newtheorem{lemma}[theorem]{Lemma}
\theoremstyle{definition}
\newtheorem{assumption}{Assumption}
\newtheorem{definition}{Definition}
\newcommand{\pr}{\mathrm{Pr}}
\title{Long-Term Causal Inference with Imperfect Surrogates using Many Weak Experiments, Proxies, and Cross-Fold Moments}
\author{Aur\'elien Bibaut \and Nathan Kallus \and Simon Ejdemyr \and Michael Zhao}
\date{}
\begin{document}
\maketitle

\begin{abstract}\vspace{-2.5em}\noindent\textbf{Summary}\quad
Inferring causal effects on long-term outcomes using short-term surrogates is crucial to rapid innovation. However, even when treatments are randomized and surrogates fully mediate their effect on outcomes, it's possible that we get the direction of causal effects wrong due to confounding between surrogates and outcomes -- a situation famously known as the surrogate paradox. The availability of many historical experiments offer the opportunity to instrument for the surrogate and bypass this confounding. However, even as the number of experiments grows, two-stage least squares has non-vanishing bias if each experiment has a bounded size, and this bias is exacerbated when most experiments barely move metrics, as occurs in practice. We show how to eliminate this bias using cross-fold procedures, JIVE being one example, and construct valid confidence intervals for the long-term effect in new experiments where long-term outcome has not yet been observed. Our methodology further allows to proxy for effects not perfectly mediated by the surrogates, allowing us to handle both confounding and effect leakage as violations of standard statistical surrogacy conditions. We further analyze regularized cross-fold procedures for the high-dimensional setting with rich surrogates, as would be needed in practice to justify exclusion restriction.
\end{abstract}

\section{Long-Term Causal Inference Using Surrogates}

\begin{figure}[b]
\centering
\caption{Causal diagrams for surrogate settings with unobserved confounders. Dashed circles indicate unobserved variables. Dotted circles indicate variables observed historically, but unobserved for novel treatments.}
\begin{subfigure}[m]{0.45\textwidth}
\centering
\begin{tikzpicture}
\node[draw, circle, text centered, minimum size=0.75cm, line width= 1] (a) {$A$};
\node[draw, circle, right=1 of a, text centered, minimum size=0.75cm, line width= 1] (s) {$S$};
\node[draw, circle, above right=0.5 and 0.325 of s,text centered, minimum size=0.75cm, dashed,line width= 1] (u) {$U$};
\node[draw, circle, right=1 of s, text centered, minimum size=0.75cm, dotted, line width= 1] (y) {$Y$};
\draw[-latex, line width= 1] (a) -- (s);
\draw[-latex, line width= 1] (s) -- (y);
\draw[-latex, line width= 1] (u) -- (s);
\draw[-latex, line width= 1] (u) -- (y);
\end{tikzpicture}
\caption{A setting with unconfounded treatment ($A$) but confounded surrogate ($S$) and outcome ($Y$).}\label{fig:simple}
\end{subfigure}
\hfill
\begin{subfigure}[m]{0.525\textwidth}
\centering
\begin{tikzpicture}
\node[draw, circle, text centered, minimum size=0.75cm, line width= 1] (a) {$A$};
\node[draw, circle, above right=0 and 1 of a, text centered, minimum size=0.75cm, line width= 1] (s) {$S_1$};
\node[draw, circle, above right=0.25 and 0.325 of s,text centered, minimum size=0.75cm, dashed,line width= 1] (u) {$U_1$};
\node[draw, circle, below right=0 and 1 of a,text centered, minimum size=0.75cm, dashed,line width= 1] (u2) {$U_2$};
\node[draw, circle, right=1 of s, text centered, minimum size=0.75cm,dotted, line width= 1] (y) {$Y$};
\node[draw, circle, right=1 of u2, text centered, minimum size=0.75cm, line width= 1] (s2) {$S_2$};
\draw[-latex, line width= 1] (a) -- (s);
\draw[-latex, line width= 1] (s) -- (y);
\draw[-latex, line width= 1] (u) -- (s);
\draw[-latex, line width= 1] (u) -- (y);
\draw[-latex, line width= 1] (a) -- (u2);
\draw[-latex, line width= 1] (u2) -- (s2);
\draw[-latex, line width= 1] (u2) -- (s);
\draw[-latex, line width= 1] (u2) -- (y);
\draw[-latex, line width= 1] (s2) -- (y);
\end{tikzpicture}
\caption{The general setting we tackle with two kinds of possible short-term observations $S=(S_1,S_2)$. The identity of the components $S_1$ and $S_2$ within $S$ need not be known.}\label{fig:complex}
\end{subfigure}
\end{figure}
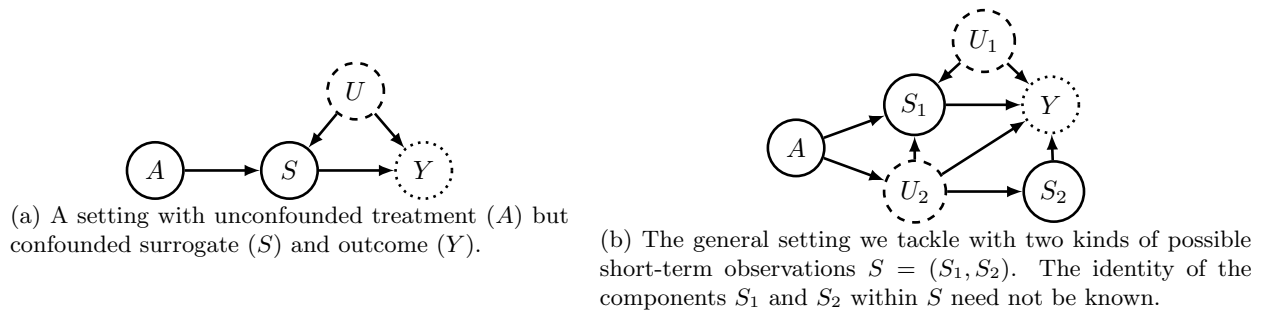

The long-term effect of interventions is often of primary concern in causal inference. Examples include the effect of early-childhood education on lifetime earnings \citep{chetty2011does}, of promotions on long-term value \citep{yang2020targeting}, or of digital platform design on long-term user retention \citep{hohnhold2015focusing}. While the gold standard for causal inference is experimentation, the significant delay of long-term observations after assignment to treatment means that, even when we can randomize the intervention, we may not be able to measure the outcome. Nevertheless, other relevant post-treatment observations are often available in the short-term. For example, in AIDS treatments we observe short-term viral loads or CD4 counts well before mortality outcomes \citep{fleming1994surrogate}. Similarly, in digital experimentation, we observe short-term signals on engagement well before retention or revenue shifts.

Leveraging these short-term observations to assist in inferring long-term effects has generated immense interest in a variety of settings. \citet{athey2020combining} and \citet{imbens2022long} consider the use of short-term observations from randomized experiments to remove confounding in observational studies of long-term outcomes. \citet{kallus2020role} consider the efficiency gains from including units with only short-term observations in analyses of experiments with long-term observations for some but not all units.

Possibly the most common setting assumes short-term observations form a \emph{statistical surrogate} \citep{prentice1989surrogate}. This requires the long-term outcome to be conditionally independent of treatment given the short-term outcomes. As this assumption may become more defensible as we include more short-term outcomes so as to mediate more of the treatment's long-term effect, \citet{athey2019surrogate} combine many short-term observations into a surrogate index, assuming they form a statistical surrogate and using historical data to regress long-term on short-term (or other ways of adjustment such as weighting). The proposal is simple, effective, and, as such, widely adopted.

\subsection{The Surrogate Paradox}

However, even if short-term outcomes fully mediate the long-term effect, they may fail to satisfy statistical surrogacy. Consider the causal diagram in \cref{fig:simple}, where $S$ perfectly mediates $A$'s effect on $Y$ (\ie, exclusion restriction), but $S$ and $Y$ share an unobserved confounder $U$, while treatment $A$ is fully unconfounded. In this case, $S$ is a collider so that conditioning on it induces a path from $A$ to $Y$ via $U$, violating surrogacy and imperiling analysis using such methods as \citet{athey2019surrogate}. 

This scenario is actually exceedingly commonplace: for example, in streaming platforms with a subscription model, a user's amount of free time both impacts their short-term engagement and their probability of subscription retention in the same direction. Failing to address this confounding leads to surrogate-based estimates that overestimate true effects on long-term outcomes. We might even have more extreme situations in which an intervention strongly increases short-term engagement of a subpopulation of users who are unlikely to unsubscribe while slightly decreasing engagement of a subpopulation of users very likely to unsubscribe. This might result in both an overall increase in short-term engagement and an overall decrease in long-term retention, a situation known as the surrogate paradox.

\subsection{Experiments as Instruments and the Bias of 2SLS}

\citet{athey2019surrogate} consider a setting where historical data prior to the present experiment contains only $S$ and $Y$ and some baseline covariates. In this context, all we can do is either worry about potential unobserved confounders, or hope the included covariates satisfy ignorability between $S$ and $Y$. However, at organizations that routinely run many digital experiments, we can take historical data from past experiments where we also observe the randomized treatments $A$. In the setting of \cref{fig:simple}, these treatments constitute an instrumental variable, which can help us identify the causal effect of $S$ on $Y$ and therefore infer the effect of a novel treatment on $Y$ by considering only its effect on $S$. (The precise identification conditions for this are covered in the next section in an even more general setting.)

Leveraging multiple experiments as instruments was considered by \citet{peysakhovich2018learning}. A challenge with estimation highlighted in the paper is asymptotic consistency. This is because the relevant asymptotic regime is \emph{not} the sample size per experiment growing large, but rather the number of experiments growing large with a bounded sample size per experiment. 
\citet{peysakhovich2018learning} show that in this regime, 2-stage least squares (2SLS) has a non-zero asymptotic bias. Intuitively, this phenomenon can be understood to arise due to correlated errors in effect estimates across experiments. In this setting, 2SLS is equivalent to regressing experiment-cell means of $Y$ on experiment-cell means of $S$. Since cell sizes are bounded, each experiment-cell mean has non-vanishing error. Also, we try to choose $S$ such that it is closely related to $Y$, so these errors are also correlated, leading to a non-vanishing error-in-variables bias in the regression of experiment-cell means, and equivalently, in 2SLS.  An important contribution of this paper is that we can avoid this bias via cross-fold estimators (such as JIVE \citep{angrist1999jackknife}), in addition to the more general identification and regularization results.

\section{Fixing Surrogacy Violations via Instrumentation and Proxies}

We consider a setting where we have a historical data set consisting of   unit-level observations $(A, S, Y) \sim \mathbb{P}$, where $A$ is a unit's test cell assignment, $S$ is a vector of short-term metrics (\eg, measured a week or two after cell assignment), and $Y$ is a long-term outcome (\eg, measured multiple months after cell assignment). We provide conditions under which we can identify the mean potential outcome $E[Y(a')]$ under a previously completely unseen treatment $a'$ using the distribution of $S$ under $a'$. The latter can of course be gleaned from measuring only short-term outcomes in an experiment run with $a'$.

We suppose that the short-term vector $S$ can be partitioned into two types of measurements $S_1$ and $S_2$, where $S_1$ mediates (part of) the effect of $A$ on $Y$, and $S_2$ is a proxy measurement of an unmeasured vector $U_2$ that accounts for the effect of $A$ on $Y$ that is not mediated by $S_1$. We allow for various violations of Prentice's statistical surrogacy condition, which is the cornerstone of previous approaches. First, we allow for the surrogate $S_1$ not to fully mediate the effect of $A$ on $Y$, as long as there is a rich enough proxy $S_2$ for the unmediated part of the effect. Second, we allow for unmeasured confounding between the surrogate $S_1$ and the long-term outcome $Y$. We formally encode these two aspects in the following two assumptions.

\begin{assumption}[Existence of potential outcomes]\label{asm:potential_outcomes}
    There exist random variables $(S_1(a), U_2(a), S_2(u_2), Y(s_1, s_2, u_2) : a, s_1, s_2, u_2)$, which we refer to as \textit{potential outcomes} such that $S_1 = S_1(A)$, $S_2 = S_2(U_2)$, and $Y = Y(S_1, S_2, U_2)$, where we define $U_2 = U_2(A)$.
\end{assumption}
Under assumption \ref{asm:potential_outcomes}, we can define two additional potential outcomes as follows.
\begin{definition}\label{def:potential_outcomes}
    Let $S_2(a) = S_2(U_2(a))$ and $Y(a) = Y(S_1(a), S_2(a), U_2(a))$.
\end{definition}

\begin{assumption}[Unmeasured confounder and potential outcomes independence]\label{asm:indep_pot_outcomes}
    There exists a random variable $U_1$ such that 
    \begin{align}
         (U_1, Y(s_1, s_2, u_2), S_1(a)) &\indep (U_2(a), S_2(u_2)) \\
         Y(s_1, s_2, u_2) & \indep S_1(a) \mid U_1.
    \end{align}
    Furthermore, $U_2(a) \indep S_2(u_2)$.
\end{assumption}

For identification of causal effects to be possible in spite of the unmeasured confounding between $S_1$ and $Y$, we require that cell assignment be randomized so that it can instrument for $S_1$. 
We encode this formally in the next three assumptions.

\begin{assumption}[Randomization]\label{asm:randomization}
    It holds that $$A \indep (U_1, U_2(a), S_1(a), S_2(u_2), Y(s_1, s_2, u_2) : a, s_1, s_2, u_2);$$
\end{assumption}
See fig \ref{fig:complex} for a causal graph compatible with  assumptions \ref{asm:potential_outcomes}-\ref{asm:randomization}.

For the instrumental variable to allow for identification of an ATE, we now make an assumption akin to requiring the effect of the surrogate $S_1$ on $Y$ to be homogeneous across principal strata. The exact assumption is actually slightly more general than no heterogeneity in principal strata and goes as follows.
\begin{assumption}[No treatment effect heterogeneity across principal strata]\label{asm:homogeneous_treatment_effect}
Letting $\varphi(s_1, s_2, u_1, u_2) = E[Y(s_1, s_2, u_2) \mid U_1=u_1] - E[Y(s_1, s_2, u_2)]$, it holds that $E[\varphi(S_1(a), s_2, U_1, u_2)] = 0$ for every $a$, $s_2$, $u_2$.
\end{assumption}
 Note that the absence of heterogeneity of effect across principal strata is a common condition implied by both parametric structural models and the nonparametric IV model \citep{newey2003instrumental}.

Finally, we impose two assumptions that are relatively common in the proximal inference literature, namely that the existence of a bridge function and a completeness condition.
\begin{assumption}[Bridge function existence]\label{asm:bridge_existence}
    There exists $h$ such that $E[Y - h(S_1, S_2) \mid A] = 0$ almost surely.
\end{assumption}

\begin{assumption}[Completeness]\label{asm:completeness}
    For every function $g$, ($E[g(S_1, U_2) \mid A]$ = 0 almost surely) implies $(g=0)$.
\end{assumption}

\begin{theorem}\label{thm:identification} In the setting of \cref{fig:complex} and under the above assumptions,
if $h$ solves the conditional moment restriction $E[Y - h(S) \mid A] = 0$ under the distribution of the historical dataset, then $E[Y(a')] = E[h(S(a'))]$ for a new intervention $a'$.
\end{theorem}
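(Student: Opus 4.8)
The plan is to show that for \emph{every} intervention $a'$ — historical or new — both $E[Y(a')]$ and $E[h(S(a'))]$ reduce to the expectation of a function of $(S_1(a'),U_2(a'))$ alone, and then to use the hypothesized conditional moment restriction at the historical treatments, together with \cref{asm:completeness}, to force those two functions to coincide. Concretely I would produce $\bar m,\tilde h$ with $E[Y(a')]=E[\bar m(S_1(a'),U_2(a'))]$ and $E[h(S(a'))]=E[\tilde h(S_1(a'),U_2(a'))]$, argue $\bar m=\tilde h$, and read off the claim by evaluating the two identities at $a'$. (Since the argument uses only that $h$ solves $E[Y-h(S)\mid A]=0$ — one such $h$ exists by \cref{asm:bridge_existence} — it also shows $a'\mapsto E[h(S(a'))]$ is well defined regardless of which solution is used.)

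For the reduction of $E[Y(a')]=E[Y(S_1(a'),S_2(a'),U_2(a'))]$, I would condition on $(S_1(a'),S_2(a'),U_2(a'),U_1)$ and use consistency — on $\{S_1(a')=s_1,S_2(a')=s_2,U_2(a')=u_2\}$ we have $Y(a')=Y(s_1,s_2,u_2)$, recalling $S_2(a')=S_2(U_2(a'))$ from \cref{def:potential_outcomes} — so that the independences in \cref{asm:indep_pot_outcomes} first remove the conditioning on $(S_2(a'),U_2(a'))$ and then on $S_1(a')$, giving $E[Y(a')]=E[g(S_1(a'),S_2(a'),U_1,U_2(a'))]$ with $g(s_1,s_2,u_1,u_2)=E[Y(s_1,s_2,u_2)\mid U_1=u_1]$. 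Decomposing $g=\varphi+m$ for $m(s_1,s_2,u_2)=E[Y(s_1,s_2,u_2)]$ and $\varphi$ as in \cref{asm:homogeneous_treatment_effect}: the independence of $(U_1,S_1(a'))$ from $(U_2(a'),S_2(a'))$ lets me write the $\varphi$-contribution as $\int E[\varphi(S_1(a'),s_2,U_1,u_2)]\,dP_{(S_2(a'),U_2(a'))}$, which vanishes by \cref{asm:homogeneous_treatment_effect}; and, since $U_2(a')\indep S_2(u_2)$ and $S_1(a')$ is independent of $(U_2(a'),S_2(U_2(a')))$, the remaining term becomes $E[\bar m(S_1(a'),U_2(a'))]$ with $\bar m(s_1,u_2)=\int m(s_1,s_2,u_2)\,dP_{S_2(u_2)}(s_2)$. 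The same sequence of factorizations applied to $h$ instead of $m$ — using only the independence structure, not \cref{asm:homogeneous_treatment_effect} — yields $E[h(S(a'))]=E[\tilde h(S_1(a'),U_2(a'))]$ with $\tilde h(s_1,u_2)=\int h(s_1,s_2)\,dP_{S_2(u_2)}(s_2)$.

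To close, specialize to a historical $a$ in the support of $A$: by \cref{asm:randomization} and consistency, $E[Y\mid A=a]=E[Y(a)]=E[\bar m(S_1(a),U_2(a))]$ and $E[h(S)\mid A=a]=E[\tilde h(S_1(a),U_2(a))]$, while the law of $(S_1,U_2)$ given $A=a$ equals the law of $(S_1(a),U_2(a))$; hence the hypothesis $E[Y-h(S)\mid A]=0$ is precisely $E[(\bar m-\tilde h)(S_1,U_2)\mid A]=0$ almost surely, and \cref{asm:completeness} gives $\bar m=\tilde h$. Substituting into the two identities at $a'$ completes the proof. The step I expect to be most delicate is not the algebra but the measure-theoretic bookkeeping: justifying the substitution of the random triple $(S_1(a'),S_2(a'),U_2(a'))$ into the potential-outcome family $\{Y(s_1,s_2,u_2)\}$ inside a conditional expectation (a disintegration argument, handled with care about null sets), and reading the conclusion of \cref{asm:completeness} as $\bar m-\tilde h$ being identically zero, so the identity transfers to an $a'$ whose induced law of $(S_1(a'),U_2(a'))$ need not be dominated by the historical one — or else assuming that support condition explicitly.
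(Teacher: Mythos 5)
Your proposal is correct and follows essentially the same route as the paper's proof: the reduction of $E[Y(a')]$ to $E[\bar m(S_1(a'),U_2(a'))]$ via the $\varphi+m$ decomposition mirrors the paper's chain of equalities using \cref{asm:homogeneous_treatment_effect}, your factorization $\tilde h(s_1,u_2)=\int h(s_1,s_2)\,dP_{S_2(u_2)}(s_2)$ is the paper's Lemmas \ref{lemma:S2u2indepU2S1}--\ref{lemma:U2S1blockAtoS2}, and your completeness step is exactly Lemma \ref{lemma:unique_bridge}. The only difference is presentational (you name the two reduced functions and equate them, while the paper writes one chain of equalities for a generic $a$), and your closing remark about reading the completeness conclusion as an everywhere identity is a fair observation that applies equally to the paper's argument.
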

That is, from the historical data, we can learn a function $h$ that \textit{bridges} the effect of $A$ on $S$ to that on $Y$ so that in new experiments $h(S)$ can serve as a \textit{surrogate outcome}, which is based only on short-term observations and the average effect on which coincides with that on the unobserved long-term outcome. Note we do not actually need to know which part of $S$ belongs to $S_1$ or to $S_2$.

\section{Cross-Fold Moments for Learning the Bridge Function from Many Weak Experiments}

As we mentioned earlier, it is commonplace that even at the scale of experiments run by digital platforms, the effects of experiments on any metric $S$ predictive of long-term outcomes is very small, rendering cell assignments $A$ weak instruments for $S$. This motivates us to work in the asymptotic regime where the number of experimental cells $K$ grows to $\infty$ while both the number of units per cell stays bounded and cells' effect on $S$ is bounded or even diminishing. The bias of usual instrumental variable estimator under the many-weak-experiments regime pointed out by \cite{peysakhovich2018learning} carries over to our more general setting and imperils standard two-stage and method-of-moments estimators for $h$. 

\subsection{Cross-fold risk}

In linear structural models, a solution to the many-weak instruments-asymptotics problem is the leave-one-out cross-fold estimator JIVE. We propose a generalized risk based on a cell-splitting device that allows for asymptotically unbiased (in the many-weak-experiments regime) estimation of bridge functions in general function classes such as neural networks or tree-based classes. Specifically, enriching observation triples $(A, S, Y)$ with an endogenously randomized fold assignment $V\sim\mathrm{Unif}\{1,\dots,L\}$, and letting $\widetilde{V}$ be an independent copy of $V$, we define the cross-fold risk for the bridge function $h$ as 
\begin{align}
    R(h) = -E[E[Y \mid A, V = \widetilde{V}] E[h(S) \mid A, V \neq \widetilde{V}]] + \frac{1}{2}  E[E[h(S) \mid A, V = \widetilde{V}]E[h(S) \mid A, V \neq \widetilde{V}]].\label{eq:cross-fold_risk}
\end{align}
 The cross-fold device allows to capture only the co-variation of $Y$ and $h(S)$ that is due to the experimental cell assignment. In particular, in linear structural models, correlated additive errors terms to $S$ and $Y$ play no role in $R(h)$, and do not appear in any minimizer of the risk, unlike is the case in 2SLS. Moreover, in a linear error-in-variables model, minimizers of the cross-fold risk are immune to attenuation bias. 

It is straightforward to check that measurable minimizers of the cross-fold risk solve the conditional moment restriction. We now give consistency guarantees as $K \to \infty$ under metric entropy conditions in general nonparametric function-approximating settings.

\subsection{Consistency of empirical cross-fold risk minimizers over nonparametric classes}

Let $\widehat h$ be a minimizer over $\mathcal{H}$ of the empirical counterpart of $R$.

Let $\Pi:\mathcal{L}_2(S) \to \mathcal{L}_2(A)$ defined, for every $h \in \mathcal{L}_2(S)$, by $[\Pi h](a) = E[h(S) \mid A = a]$, and for any $v$, let $\widehat\Pi^v:\mathcal{L}_2(S) \to \mathcal{L}_2(A)$, defined by $[\widehat\Pi^v h](a) = (n / L)^{-1} \sum_i h(S_i) \bm{1}\{A_i = a, V_i = v\}$. For any $f: \mathcal{O} \to \mathbb{R}$, and $u : \mathcal{A} \to \mathbb{R}$, let
\begin{align}
    \left\lVert f \right\rVert_{1, N} = \frac{1}{N} \sum_{i=1}^N | f(O_i)| \qquad \text{and} \qquad \left\lVert u \right\rVert_{1, K} = \frac{1}{K} \sum_{a=1}^K |u(a)|.
\end{align}
For any $f \in \mathcal{L}_2(S)$, let 
\begin{align}
    \rho_N(f) =  \left\lVert \Pi f \right\rVert_{1, K}  +  \left\lVert f \right\rVert_{1, N}.
\end{align}

\begin{assumption}[Bounded range]\label{asm:bounded_range}
    It holds that $Y \in [-1, 1]$ a.s. and that $\sup_{h \in \mathcal{H}} \left\lVert h \right\rVert_\infty \leq 1$.
\end{assumption}

\begin{assumption}[Entropy]\label{asm:entropy}
    For any $\epsilon > 0$, the $\rho_N$-covering entropy of $\mathcal{H}$ satisfies $\log N(\epsilon, \mathcal{H}, \rho_N) = o_P(N)$.
\end{assumption}

\begin{theorem}[Nonparametric consistency]\label{thm:nonparametric_consistency}
    Suppose that assumptions \ref{asm:bounded_range} and \ref{asm:entropy} hold. Then as $K \to \infty$, $n$ stays fixed, 
    \begin{align}
        R(\widehat h) - \inf_{h \in \mathcal{H}} R(h) = o_P(1).
    \end{align}
\end{theorem}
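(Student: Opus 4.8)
The plan is to run the standard Wald-type consistency argument for $M$-estimators, with the cross-fold structure entering only to guarantee that the empirical risk has the correct mean. Write $\widehat R$ for the empirical counterpart of $R$, so that $\widehat h$ minimizes $\widehat R$ over $\mathcal H$. I would first reduce the claim to the uniform law of large numbers $\sup_{h\in\mathcal H}|\widehat R(h)-R(h)|=o_P(1)$: for any $h^\star\in\mathcal H$ one has $R(\widehat h)\le \widehat R(\widehat h)+\sup_h|R(h)-\widehat R(h)|\le \widehat R(h^\star)+\sup_h|R-\widehat R|\le R(h^\star)+2\sup_h|R-\widehat R|$, and taking $h^\star$ to nearly attain $\inf_{\mathcal H}R$ and then sending the slack to $0$ gives $0\le R(\widehat h)-\inf_{\mathcal H}R\le 2\sup_h|\widehat R(h)-R(h)|$.

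The first substantive step is to check that the cell-splitting device makes $\widehat R$ \emph{exactly} unbiased for $R$, that is, $E[\widehat R(h)]=R(h)$ for every $h\in\mathcal H$. This is where the fold construction earns its keep: within each experimental cell the average of $Y$ and the average of $h(S)$ that enter $\widehat R$ are formed from disjoint folds, hence are conditionally independent given the cell and fold memberships, each with the correct conditional mean — $(\Pi h_0)(A)$ for the $Y$-average (using $E[Y\mid A]=(\Pi h_0)(A)$ with $h_0$ a bridge function, \cref{asm:bridge_existence}) and $(\Pi h)(A)$ for the $h(S)$-average — so the error-in-variables cross term that biases 2SLS integrates to zero, leaving $E[\widehat R(h)]=-E[(\Pi h_0)(A)(\Pi h)(A)]+\tfrac12 E[(\Pi h)(A)^2]=R(h)$. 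Exactness genuinely matters here: unlike sampling fluctuations, a bias term not depending on $K$ would survive the $K\to\infty$, $n$-fixed limit. The upshot is that it now suffices to bound the \emph{centered} process $\nu_K(h):=\widehat R(h)-E[\widehat R(h)]$ uniformly over $\mathcal H$.

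For that I would use two structural facts. First, $\widehat R(h)=\frac1K\sum_{a=1}^K g_a(h)$, where $g_a(h)$ is a function of the data in cell $a$ only, and $|g_a(h)|\le C$ uniformly by \cref{asm:bounded_range} (each $g_a(h)$ is a product of cell–fold averages of quantities in $[-1,1]$); so the summands are independent across the $K$ cells (conditionally on the cell sizes, which is harmless for the concentration step). Second, both $\widehat R$ and the population risk are Lipschitz in $h$ with respect to $\rho_N$: the $\|\cdot\|_{1,N}$ component of $\rho_N$ controls the oscillation of $\widehat R$ through the cell–fold averages $\widehat\Pi^v(h-h')$, while the $\|\Pi\,\cdot\|_{1,K}$ component controls that of $R(h)=\tfrac12\|\Pi h-\Pi h_0\|^2+\mathrm{const}$ through $\|\Pi(h-h')\|$ (using the elementary interpolation $\|\Pi f\|_2^2\le 2\|\Pi f\|_1$ valid under \cref{asm:bounded_range}, together with an auxiliary uniform LLN — also implied by \cref{asm:entropy} — to pass between the population norm $\|\Pi\cdot\|_1$ and its $K$-cell version $\|\Pi\cdot\|_{1,K}$). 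Given $\epsilon>0$, pick a minimal $\rho_N$-net $\mathcal H_\epsilon$ of $\mathcal H$; on the event $\{\log|\mathcal H_\epsilon|\le\delta N\}$, which has probability tending to $1$ by \cref{asm:entropy}, write $|\nu_K(h)|\le\max_{h'\in\mathcal H_\epsilon}|\nu_K(h')|+C'\epsilon$. Hoeffding's inequality over the $K$ independent bounded summands gives $\pr(|\nu_K(h')|>t)\le 2e^{-Kt^2/(8C^2)}$, so a union bound costs $e^{\delta N}=e^{\delta n K}$; since $n$ and $L$ are fixed this is $2e^{(\delta n-t^2/(8C^2))K}\to 0$ once $\delta$ is small relative to $t$. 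Hence $\max_{\mathcal H_\epsilon}|\nu_K|=o_P(1)$, so $\sup_h|\nu_K(h)|\le C'\epsilon+o_P(1)$; letting $\epsilon\downarrow 0$ yields $\sup_h|\nu_K(h)|=o_P(1)$ and closes the argument.

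The routine pieces are the Wald reduction and the conditional Hoeffding bound. The parts I expect to require real care are (a) verifying that the fold-splitting combinatorics (and the $n/L$ normalization) leave \emph{no} residual bias, so that $E[\widehat R(h)]=R(h)$ holds on the nose rather than up to an $O(1/n)$ correction; and (b) making the discretization rigorous despite $\rho_N$ being data-dependent and the covering number being only $o_P(N)$ rather than $O(1)$ — one must condition on the data before invoking both the net and the entropy bound, and verify that one and the same random metric $\rho_N$ simultaneously governs the oscillation of the empirical risk $\widehat R$ (via $\|\cdot\|_{1,N}$) and of the population risk $R$ (via $\|\Pi\cdot\|_{1,K}$). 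I expect (b) to be the main obstacle.
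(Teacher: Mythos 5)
Your proposal matches the paper's proof in all essentials: the ERM reduction to uniform convergence of $\widehat R$ to $R$, the observation that the cross-fold construction makes the discrepancy $R(h)-\widehat R(h)$ a sum of exactly (conditionally) centered cell--fold terms, the discretization of $\mathcal{H}$ in the data-dependent seminorm $\rho_N$ with a Lipschitz/oscillation bound of order $\epsilon$, and the finite-maximum bound combined with $\log N(\epsilon,\mathcal{H},\rho_N)=o_P(N)$ and $N=nK$ with $n$ fixed. The only substantive difference is the concentration tool for the finite maximum --- you use Hoeffding over the $K$ independent cells plus a union bound, whereas the paper conditions on one fold, symmetrizes, and applies a sub-Gaussian (Orlicz-norm) maximal inequality to get $E[\max_j|R(h_j)-\widehat R(h_j)|]\lesssim\sqrt{\log(1+m)/N}$ --- and the paper's route is also the natural resolution of the data-dependent-net issue you flag as obstacle (b), since after conditioning the remaining randomness is independent of how the net was selected.
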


\section{Linear instantiation: $L$-fold JIVE}

\subsection{Estimator derivation under a linear structural model}

As a practical illustration of our method, we consider cross-fold risk minimizers under a linear structural model. Specifically, we suppose that we have $K$ cells with $n$ units each, and that for each unit $i$, the long-term outcome $Y_i$ and the short term metrics $S_i$ satisfy
\begin{align}
    Y_i = \underbrace{S_i}_{1 \times d} \beta + \delta \underbrace{U_i}_{1 \times 1} + \epsilon_i \qquad \text{and} \qquad S_i =\underbrace{A_i}_{1 \times K} \underbrace{\Pi}_{K \times d} + \gamma U_i + \eta_i, \label{eq:linear_structural_model}
\end{align}
where $A_i$ is the one-hot encoding of cell-membership (we abuse the notation by using interchangeably cell number and one-hot encoding), $E[U_i \mid A_i] = 0$, $E[\epsilon_i \mid A_i, S_i, U_i] = 0$, and $E[\eta_i \mid A_i, U_i] = 0$. Dividing each cell in $L$ equal folds, such that $V_i \in [L]$ is the fold membership of unit $i$, we can show that the minimizer of the empirical counterpart of the cross-fold risk \eqref{eq:cross-fold_risk} can be expressed only from cell-fold-level aggregates as follows:
\begin{align}
    \widehat \beta = \left(\sum_{a=1}^K \sum_{v = 1}^L {\bar{S}_{a,-v}}^\top \bar S_{a,v}\right)^{-1} \sum_{a=1}^K \sum_{v = 1}^L {\bar{S}_{a,-v}}^\top \bar Y_{a,v}, \qquad &\text{where} \qquad \bar S_{a,-v} = \frac{1}{n \frac{L-1}{L}} \sum_i S_i \bm{1}\{A_i = a, V_i \neq v\},\\
    \bar S_{a,v} = \frac{1}{n / L} \sum_i S_i \bm{1}\{A_i = a, V_i = v\} \qquad &\text{and} \qquad \bar Y_{a,v} = \frac{1}{n / L} \sum_i Y_i \bm{1}\{A_i = a, V_i = v\}.
\end{align}
The expression of $\widehat \beta$ above is a $L$-fold version of the Jackknife Instrumental Variable Estimator (JIVE) of \cite{angrist1999jackknife} (the original version of JIVE uses leave-one-out sample splitting as opposed of $L$-fold for fixed $L$ here). That $\widehat \beta$ can be expressed from fold-level aggregates is of great practical significance for organizations that conduct a large number of AB tests: fold-level aggregates can be easily logged for each experiment, which then renders computing $\widehat \beta$ quick and easy.

\subsection{Asymptotics in a simplified case}

The asymptotic analysis of the $L$-fold JIV estimator differs from the analysis of the original LOO JIVE. 
In this section, we provide guarantees for a simplified two-fold JIV estimator under a simplified setting with one-endogenous variable and homoskedastic iid errors (specifically, we assume that the pairs $(\eta_i, \epsilon_i)$ are i.i.d.).  We provide asymptotics under a sequence of DGPs indexed by $m = 1,2, \ldots$. For $v \in \{0,1\}$, we denote $\bar \eta_{a,v,m}$ and $\bar \epsilon_{a,v,m}$ the fold-level errors, and $\sigma^2_{\bar \eta,m} = E \bar \eta_{a,v,m}^2$ and $\sigma^2_{\bar \epsilon,m} = E \bar \epsilon_{a,v,m}^2$. We define the $m$-th two-fold JIVE as a function of a draw of the $m$-th DGP as follows:
\begin{align}
    \widehat \beta_m = \left( \sum_{a=1}^K \bar S_{a,0,m} \bar S_{a,1,m} \right)^{-1} \sum_{a=1}^K \bar S_{a,0,m} \bar Y_{a,1,m}.
\end{align}
We denote $\pi_{a,m}$ the first-stage effect under the $m$-th DGP in cell $a$. We suppose that the number of cells/instruments $K_m \to \infty$.
We make the following assumptions. 
\begin{assumption}[Instrument strength]\label{asm:instrument_strength}
    It holds that
    \begin{align}
        \frac{\sum_{a=1}^K \pi_{a,m}^2}{K_m \sigma_{\bar \eta, m}^2} \to 0, \qquad \frac{\sigma_{\bar \epsilon, m}}{\sigma_{\bar \eta, m}} \to c \in [0, \infty), \qquad \text{and} \qquad \frac{\sum_{a=1}^K \pi_{a,m}^2}{\sqrt{K_m} \sigma_{\bar \eta, m}^2} \to \infty.
    \end{align}
\end{assumption}
The first part of assumption \ref{asm:instrument_strength} defines the weak instrument setting. It says that the average strength of the instruments converges to zero. The second part is a technical condition. The third part is a common assumption in the JIVE literature stating that the ratio of the so-called concentration parameter (the quantity $\sum_{a=1}^{K_m} \pi_{a,m}^2 / \sigma_{\bar \eta, m}^2$) and of the square root of the number of instrument diverges. This condition imposes a lower bound on the average instrument strength.
Our next assumption is a classic Lindeberg condition on the errors.
\begin{assumption}[Lindeberg]\label{asm:lindeberg}
    It holds that 
    \begin{align}
        E\left[ \left( \frac{\bar \eta_{1,0,m} \bar \epsilon_{1,1,m}}{\sigma_{\bar \eta, m} \sigma_{\bar \epsilon, m}}  \right)^2 \bm{1} \left\lbrace \left\lvert \frac{\bar \eta_{1,0,m} \bar \epsilon_{1,1,m}}{\sigma_{\bar \eta, m} \sigma_{\bar \epsilon, m}} \right\rvert \geq \epsilon \sqrt{K_m}\right\rbrace \right] \to 0 \qquad \text{for every } \epsilon > 0. 
    \end{align}
\end{assumption}
We can now state the asymptotics theorem.
\begin{theorem}\label{thm:lin_JIVE_asymptotics}
    Let $\widehat{\sigma}_{\bar \eta, m}$ and $\widehat{\sigma}_{\bar \epsilon, m}$ be such that $\widehat{\sigma}_{\bar \eta, m} / \sigma_{\bar \eta, m} \xrightarrow{P} 1$ and $\widehat{\sigma}_{\bar \epsilon, m} / \sigma_{\bar \epsilon, m} \xrightarrow{P} 1$. Suppose assumptions \ref{asm:instrument_strength} and \ref{asm:lindeberg} hold. Let 
    \begin{align}
        \mathcal{C}_m = \left[ \widehat \beta_m \pm q_{1-\alpha/2} \left(\frac{\sum_{a=1}^K \bar S_{a,0,m} \bar S_{a,1,m}}{\sqrt{K_m}} \right)^{-1} \widehat \sigma_{\bar \eta,m} \widehat \sigma_{\bar \epsilon, m} \right],
    \end{align}
    where $q_{1-\alpha / 2}$ is the $1-\alpha/2$ quantile of the normal distribution. Then 
    $\Pr_m[\beta \in \mathcal{C}_m] \to 1 - \alpha$ and $\beta_m \xrightarrow{P} \beta$.
\end{theorem}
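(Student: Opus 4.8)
The plan is to write $\widehat\beta_m - \beta$ as an explicit ratio and analyze numerator and denominator separately, exploiting the fact that the two folds within a cell are built from disjoint sets of i.i.d.\ units --- this is exactly what turns the endogeneity bias into a mean-zero fluctuation. First I would plug the structural model into the estimator: since $\bar Y_{a,1,m} = \beta\bar S_{a,1,m} + \bar\epsilon_{a,1,m}$ and $\bar S_{a,0,m} = \pi_{a,m} + \bar\eta_{a,0,m}$ in this simplified model (absorbing any confounder term into $\bar\epsilon$), one gets $\widehat\beta_m - \beta = N_m/D_m$ with $N_m = \sum_{a=1}^{K_m}\pi_{a,m}\bar\epsilon_{a,1,m} + \sum_{a=1}^{K_m}\bar\eta_{a,0,m}\bar\epsilon_{a,1,m}$ and $D_m = \sum_{a=1}^{K_m}\bar S_{a,0,m}\bar S_{a,1,m}$, and I would record at the outset that $\bar\eta_{a,0,m}$ is independent of both $\bar\epsilon_{a,1,m}$ and $\bar\eta_{a,1,m}$ (disjoint folds), with everything independent across cells.

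Next I would establish a law of large numbers for the denominator. Expanding $D_m = \sum_a\pi_{a,m}^2 + \sum_a\pi_{a,m}(\bar\eta_{a,0,m}+\bar\eta_{a,1,m}) + \sum_a\bar\eta_{a,0,m}\bar\eta_{a,1,m}$ and dividing by $\sigma_{\bar\eta,m}^2$, the leading term is the concentration parameter $\Lambda_m := \sum_a\pi_{a,m}^2/\sigma_{\bar\eta,m}^2$. Fold-independence makes the two remaining sums mean-zero, with normalized variances $2\Lambda_m$ and $K_m$; since \cref{asm:instrument_strength} forces $\Lambda_m\to\infty$ and $\Lambda_m/\sqrt{K_m}\to\infty$, both are $o_P(\Lambda_m)$, so $D_m/(\sigma_{\bar\eta,m}^2\Lambda_m) = 1 + o_P(1)$. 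For the numerator, normalizing by $\sigma_{\bar\eta,m}\sigma_{\bar\epsilon,m}$, the first sum is mean-zero with variance $\Lambda_m = o(K_m)$ by the weak-instrument part of \cref{asm:instrument_strength}, hence $o_P(\sqrt{K_m})$; the second sum $\sum_a\bar\eta_{a,0,m}\bar\epsilon_{a,1,m}$ is a sum of $K_m$ independent mean-zero terms, each (again by fold-independence) of variance exactly $\sigma_{\bar\eta,m}^2\sigma_{\bar\epsilon,m}^2$, so after the natural scaling it converges to $N(0,1)$ by the Lindeberg CLT, whose hypothesis is precisely \cref{asm:lindeberg} (the i.i.d.\ assumption reduces the triangular-array Lindeberg condition to the single displayed one). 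Thus $N_m/(\sqrt{K_m}\sigma_{\bar\eta,m}\sigma_{\bar\epsilon,m}) = Z_m + o_P(1)$ with $Z_m\xrightarrow{d}N(0,1)$.

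Combining, $\widehat\beta_m - \beta = (\sigma_{\bar\epsilon,m}/\sigma_{\bar\eta,m})(\sqrt{K_m}/\Lambda_m)(Z_m+o_P(1))/(1+o_P(1))$; the prefactor tends to zero because $\sigma_{\bar\epsilon,m}/\sigma_{\bar\eta,m}\to c<\infty$ and $\sqrt{K_m}/\Lambda_m\to0$ (the instrument-strength lower bound), which gives $\widehat\beta_m\xrightarrow{P}\beta$. For the confidence interval, $\sum_a\bar S_{a,0,m}\bar S_{a,1,m} = D_m > 0$ with probability tending to one, and on that event $\beta\in\mathcal C_m$ is equivalent to $|N_m|/(\sqrt{K_m}\,\widehat\sigma_{\bar\eta,m}\widehat\sigma_{\bar\epsilon,m})\le q_{1-\alpha/2}$; writing $N_m/(\sqrt{K_m}\widehat\sigma_{\bar\eta,m}\widehat\sigma_{\bar\epsilon,m}) = (Z_m+o_P(1))\cdot(\sigma_{\bar\eta,m}/\widehat\sigma_{\bar\eta,m})\cdot(\sigma_{\bar\epsilon,m}/\widehat\sigma_{\bar\epsilon,m})$ and applying Slutsky with the assumed consistency of the variance estimators, this converges in distribution to $N(0,1)$, so the coverage probability tends to $2\Phi(q_{1-\alpha/2})-1 = 1-\alpha$.

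The main obstacle is that the denominator and numerator analyses pull the instrument-strength conditions in opposite directions: the denominator needs the deterministic signal $\Lambda_m$ to dominate its noise, requiring the \emph{lower} bound $\Lambda_m/\sqrt{K_m}\to\infty$, while the numerator needs the error--error cross term to dominate the signal--error term, requiring the \emph{upper} bound $\Lambda_m/K_m\to0$ --- and it is precisely the fold-splitting that makes the surviving error--error term mean-zero, so that what remains is a Gaussian fluctuation rather than the non-vanishing attenuation bias of 2SLS. The accompanying care is in the independence bookkeeping across folds and cells and in verifying that the relevant per-cell variance is exactly (not merely asymptotically) $\sigma_{\bar\eta,m}^2\sigma_{\bar\epsilon,m}^2$, so that the limiting variance in the CLT is $1$ and the studentization by $\widehat\sigma_{\bar\eta,m}\widehat\sigma_{\bar\epsilon,m}$ is correctly calibrated.
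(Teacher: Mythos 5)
Your proposal is correct and follows essentially the same route as the paper: the same decomposition of the numerator into a signal--error term (shown $o_P(\sqrt{K_m}\,\sigma_{\bar\eta,m}\sigma_{\bar\epsilon,m})$ via the weak-instrument condition) plus a fold-independent error--error term handled by the Lindeberg CLT, and the same three-term expansion of the denominator dominated by the concentration parameter via the lower-bound condition $\sum_a\pi_{a,m}^2/(\sqrt{K_m}\sigma_{\bar\eta,m}^2)\to\infty$. Your write-up is somewhat more explicit than the paper's (notably the positivity of the denominator and the Slutsky step for studentization), but there is no substantive difference in approach.
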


\section{Numerical experiments}

\begin{figure}[t!]
\centering
\begin{subfigure}[m]{0.475\textwidth}
\centering
\includegraphics[scale=0.5]{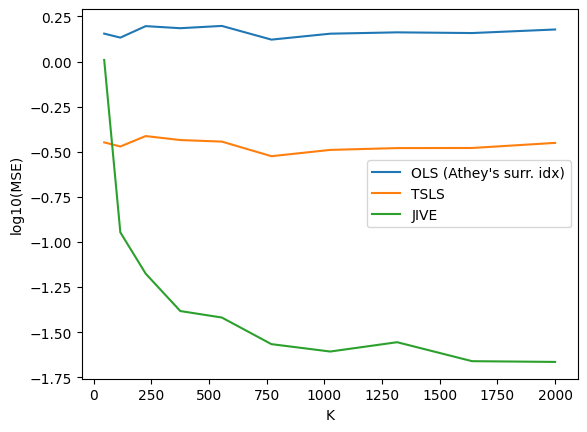}
\caption{MSE}\label{fig:MSE}
\end{subfigure}
\hfill
\begin{subfigure}[m]{0.475\textwidth}
\centering
\includegraphics[scale=0.5]{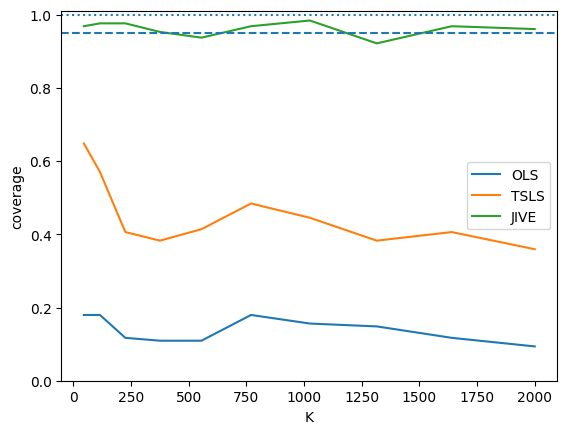}
\caption{Coverage}\label{fig:coverage}
\end{subfigure}
\end{figure}

We compare the performance of $5$-fold JIVE, two-stage least square (TSLS) and the linear surrogate-index method (OLS of $Y$ on $S$) in learning a surrogate bridge function of the form $h(S) = S \beta$. We simulate $K$ cells (for $K$ ranging from 45 to 2000) with $100$ units each where the data-generating process is of the form specified in \eqref{eq:linear_structural_model}. We set $d = 5$, $\epsilon_i \sim 3 \times \mathcal N(0,1)$, $\eta_i \sim \mathcal{N}(0, I_5)$, $U \sim 3 \times \mathcal{N}(0,1)$, $\delta = 1$. We draw the first-stage treatment effects, that is the rows of $\Pi$ effects $0.1 \times \mathcal{N}(0, I_5)$. We draw $\beta \sim \mathcal{N}(0, I_5) / \sqrt{5}$, and $\gamma \sim \mathcal{N}(0, I_5) / \sqrt{5}$. In the novel cell $a'$, we set the vector $a' \Pi$ of first-stage treatment effects to $(1,1,1,1,1)$.

We estimate the mean squared error $E[(a'\Pi \beta - S(a') \widehat \beta )^2]$ of the learned surrogate w.r.t. the true long-term outcome $E[Y(a')] = a'\Pi \beta$ in the novel cell $a'$ and the coverage of $E[Y(a')]$ by confidence intervals (accounting for the randomness in $\widehat \beta$ coming from historical data sampling and that in $S(a')$ coming from novel-cell data sampling).

\bibliographystyle{abbrvnat}
\bibliography{lit}

\appendix

\section{Proof of the identification result (theorem \ref{thm:identification})}

The proof of theorem \ref{thm:identification} relies on the following intermediate results.

\begin{lemma}\label{lemma:S2u2indepU2S1}
Suppose that assumptions \ref{asm:potential_outcomes}-\ref{asm:randomization}
hold. Then, $(U_2, S_1) \indep S_2(u_2)$ for every $u_2$.
\end{lemma}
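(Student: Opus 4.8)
The plan is to express the observed pair $(U_2, S_1) = (U_2(A), S_1(A))$ through potential outcomes, establish the independence first at the level of the fixed-index potential outcomes $(S_1(a), U_2(a))$ versus $S_2(u_2)$, and then transfer it to the realized variables by conditioning on $A$ and invoking randomization.

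First I would extract from \cref{asm:indep_pot_outcomes} the two ingredients that matter here. Since $S_1(a)$ is a coordinate of the first collection in that assumption's opening display and $(U_2(a), S_2(u_2))$ is a sub-collection of the second, independence of the collections yields $S_1(a) \indep (U_2(a), S_2(u_2))$ for every $a$ and $u_2$; the last sentence of the assumption gives $U_2(a) \indep S_2(u_2)$.

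Second I would merge these into $(S_1(a), U_2(a)) \indep S_2(u_2)$ for every $a$ and $u_2$. This is the elementary fact that $X \indep (Y,Z)$ together with $Y \indep Z$ makes $X, Y, Z$ mutually independent, hence $(X,Y) \indep Z$: for test functions $f, g, k$ one conditions on $(U_2(a), S_2(u_2))$ and uses the first independence to get $E[f(S_1(a)) g(U_2(a)) k(S_2(u_2))] = E[f(S_1(a))]\, E[g(U_2(a)) k(S_2(u_2))]$, and then applies $U_2(a) \indep S_2(u_2)$ to factor the remaining factor. This is the only step with real content, and it is where I would be most careful to keep straight which symbols denote random variables and which denote fixed indices.

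Third I would pass to the realized variables. Using $S_1 = S_1(A)$ and $U_2 = U_2(A)$ from \cref{asm:potential_outcomes}, for test functions $f, k$ write $E[f(S_1, U_2) k(S_2(u_2))] = E\left[ E[f(S_1(a), U_2(a)) k(S_2(u_2))]\big|_{a=A} \right]$; the inner conditional expectation given $A=a$ equals the unconditional one because, by \cref{asm:randomization}, $A$ is independent of the whole collection of potential outcomes, so conditioning on $A=a$ leaves the joint law of $(S_1(a), U_2(a), S_2(u_2))$ unchanged. By the second step the inner expectation factors as $E[f(S_1(a), U_2(a))]\big|_{a=A}\cdot E[k(S_2(u_2))]$, and one further use of \cref{asm:randomization} rewrites $E\left[ E[f(S_1(a), U_2(a))]\big|_{a=A} \right] = E[f(S_1, U_2)]$. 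Hence $E[f(S_1, U_2) k(S_2(u_2))] = E[f(S_1, U_2)]\, E[k(S_2(u_2))]$, i.e., $(U_2, S_1) \indep S_2(u_2)$. The only genuine obstacle is the measure-theoretic bookkeeping in the second and third steps; the structural content is just instrument independence chained with a few pairwise independences.
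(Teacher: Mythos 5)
Your proposal is correct and follows essentially the same route as the paper's proof: the paper also conditions on $A=a$, uses consistency to replace $(S_1,U_2)$ by $(S_1(a),U_2(a))$, and invokes \cref{asm:indep_pot_outcomes} and \cref{asm:randomization} to pull $\pr[S_2(u_2)=s_2]$ out of the sum over $a$. The only difference is presentational — you work with test functions and make explicit the step of merging $S_1(a)\indep(U_2(a),S_2(u_2))$ with $U_2(a)\indep S_2(u_2)$ into $(S_1(a),U_2(a))\indep S_2(u_2)$, which the paper uses implicitly when it cites the two assumptions for its second equality.
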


\begin{proof}
    \begin{align}
        &\pr[S_2(u_2) = s_2 \mid S_1 = s_1, U_2 = u_2] \\
        =& \sum_a \pr[S_2(u_2) = s_2 \mid S_1(a) = s_1, U_2(a) = u_2, A= a] \\
        & \qquad \times \pr[A=a \mid S_1 = s_1, U_2 = u_2]\\
        =& \pr[S_2(u_2) = s_2] \sum_a \pr[A=a \mid S_1 = s_1, U_2 = u_2] \\
        =& \pr[S_2(u_2) = s_2],
    \end{align} 
    where the second equality follows from assumptions \ref{asm:indep_pot_outcomes} and \ref{asm:randomization}.
\end{proof}

\begin{lemma}\label{lemma:U2S1blockAtoS2}
    Suppose that assumptions \ref{asm:potential_outcomes}-\ref{asm:randomization}
hold. Then $S_2 \indep A \mid U_2, S_1$.
\end{lemma}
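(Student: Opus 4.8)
The plan is to show directly that the conditional law of $S_2$ given $(A,U_2,S_1)$ does not depend on the conditioning value of $A$. Fix values $a, u_2, s_1$. The first step is a consistency substitution: by \cref{asm:potential_outcomes} we have $U_2 = U_2(A)$ and $S_1 = S_1(A)$, so on the event $\{A = a\}$ the event $\{U_2 = u_2, S_1 = s_1\}$ coincides with $\{U_2(a) = u_2, S_1(a) = s_1\}$; moreover $S_2 = S_2(U_2)$, so on $\{U_2 = u_2\}$ we have $S_2 = S_2(u_2)$. Hence
\[
\pr[S_2 = s_2 \mid A = a, U_2 = u_2, S_1 = s_1] = \pr[S_2(u_2) = s_2 \mid A = a, U_2(a) = u_2, S_1(a) = s_1],
\]
and it suffices to show that the right-hand side does not depend on $a$.

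The second step is to establish the joint independence $S_2(u_2) \indep (A, U_2(a), S_1(a))$. Two ingredients feed in. First, \cref{asm:randomization} gives $A \indep (S_2(u_2), U_2(a), S_1(a))$, since $A$ is independent of the whole collection of potential outcomes. Second, exactly as in the proof of \cref{lemma:S2u2indepU2S1}, combining $S_1(a) \indep (U_2(a), S_2(u_2))$ with $U_2(a) \indep S_2(u_2)$ from \cref{asm:indep_pot_outcomes} yields $S_2(u_2) \indep (U_2(a), S_1(a))$. A short factorization then upgrades these to the joint statement: write the joint mass of $(S_2(u_2), A, U_2(a), S_1(a))$, peel off $A$ using the first independence, split $S_2(u_2)$ from $(U_2(a), S_1(a))$ using the second, and recombine $A$ with $(U_2(a), S_1(a))$ using the first once more.

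Given this joint independence, the right-hand side of the displayed identity equals $\pr[S_2(u_2) = s_2]$, which depends only on $(u_2, s_2)$ and in particular not on $a$. This is precisely the claimed conditional independence $S_2 \indep A \mid U_2, S_1$.

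The main obstacle I anticipate is the bookkeeping around the consistency substitution and the recombination step rather than any genuine difficulty: one must be careful that conditioning on $\{A = a\}$ legitimately replaces the observed $U_2, S_1, S_2$ by their potential-outcome versions indexed by the matching values $a$ (and $u_2$), and that the pairwise/marginal independences genuinely deliver the triple-wise independence invoked — which works here only because \cref{asm:randomization} makes $A$ independent of the entire potential-outcome vector, not merely of individual components.
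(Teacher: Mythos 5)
Your proof is correct and follows essentially the same route as the paper's: a consistency substitution to rewrite the conditional law in terms of potential outcomes, then the joint independence $S_2(u_2) \indep (A, U_2(a), S_1(a))$ obtained from \cref{asm:indep_pot_outcomes} and \cref{asm:randomization} to show the result equals $\pr[S_2(u_2)=s_2]$. The only cosmetic difference is that you conclude by observing the conditional law is constant in $a$, whereas the paper re-conditions on the observed $(U_2, S_1)$ via \cref{lemma:S2u2indepU2S1} and applies consistency once more; both are valid.
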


\begin{proof}
    \begin{align}
        &\pr[S_2  = s_2 \mid U_2 = u_2, S_1 = s_1, A = a] \\
        =& \pr[S_2(u_2) = s_2 \mid U_2(a) = u_2, S_1(a) = s_1, A = a] \\
        =& \pr[S_2(u_2) = s_2] \\
        =& \pr[S_2(u_2) = s_2 \mid U_2 = u_2, S_1 = s_1]\\
        =& \pr[S_2 = s_2 \mid U_2 = u_2, S_1 = s_1],
    \end{align}
    where the first equality follows from assumption \ref{asm:potential_outcomes}, the second one from assumptions \ref{asm:indep_pot_outcomes} and \ref{asm:randomization}, the third one from lemma \ref{lemma:S2u2indepU2S1}, and the last one from assumption \ref{asm:potential_outcomes}.
\end{proof}

\begin{lemma}\label{lemma:unique_bridge} Suppose that assumptions \ref{asm:potential_outcomes}-\ref{asm:randomization} and \ref{asm:bridge_existence}-\ref{asm:completeness}
hold. Then, if $E[Y \mid A] = E[f(S_1, S_2, U_2) \mid A]$ almost surely for some function $f$, then $((s_1, u_2) \mapsto E[f(S_1, S_2, U_2) - h(S_1, S_2) \mid S_1 = s_1, U_2]) = 0$.
\end{lemma}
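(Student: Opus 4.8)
The plan is to reduce the statement to an application of the completeness condition in \cref{asm:completeness}. Let $h$ be the bridge function from \cref{asm:bridge_existence}, so $E[Y - h(S_1,S_2)\mid A]=0$ almost surely. Subtracting this from the hypothesis $E[Y\mid A]=E[f(S_1,S_2,U_2)\mid A]$ gives $E[\Delta\mid A]=0$ a.s., where $\Delta := f(S_1,S_2,U_2)-h(S_1,S_2)$. Writing $g(s_1,u_2):=E[\Delta\mid S_1=s_1,U_2=u_2]$, the conclusion to prove is precisely $g=0$, and by \cref{asm:completeness} it is enough to establish $E[g(S_1,U_2)\mid A]=0$ a.s.

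The only real content is passing from an $A$-conditional moment of $\Delta$, a function of the full triple $(S_1,S_2,U_2)$, to an $A$-conditional moment of $g$, which depends only on $(S_1,U_2)$ --- exactly the form of function that \cref{asm:completeness} speaks about. This is where \cref{lemma:U2S1blockAtoS2} enters: since $S_2\indep A\mid U_2,S_1$, the conditional law of $S_2$ given $(S_1,U_2,A)$ does not depend on $A$, hence $E[\Delta\mid S_1,U_2,A]=E[\Delta\mid S_1,U_2]=g(S_1,U_2)$. Applying the tower property then gives $0=E[\Delta\mid A]=E[\,E[\Delta\mid S_1,U_2,A]\mid A\,]=E[g(S_1,U_2)\mid A]$ a.s., and \cref{asm:completeness} yields $g=0$.

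I expect the main obstacle to be conceptual rather than technical: recognizing that the right move is to integrate $S_2$ out of $\Delta$ using \cref{lemma:U2S1blockAtoS2} so that the resulting object matches the scope of the completeness assumption. The remaining points --- that $g$ is a well-defined measurable function, that the conditional-independence statement of \cref{lemma:U2S1blockAtoS2} indeed justifies the identity $E[\Delta\mid S_1,U_2,A]=g(S_1,U_2)$ for suitable versions of the conditional expectations, and that the relevant quantities are integrable (immediate in the intended application where $Y$, $h$ and $f$ are bounded) --- are routine measure-theoretic bookkeeping.
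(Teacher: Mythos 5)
Your proposal is correct and follows essentially the same route as the paper's own proof: subtract the bridge moment condition from the hypothesis, use \cref{lemma:U2S1blockAtoS2} together with the tower property to replace the $A$-conditional moment of $f-h$ by the $A$-conditional moment of its $(S_1,U_2)$-conditional expectation, and conclude by \cref{asm:completeness}. Your write-up is in fact slightly more explicit than the paper's about why the conditional independence justifies the tower-property step.
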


\begin{proof} From assumption \ref{asm:bridge_existence}, 
    \begin{align}
        0 =& E[f(S_1, S_2, U_2) - h(S_1, S_2) \mid A ] \\
        =& E[ E[f(S_1, S_2, U_2) - h(S_1, S_2) \mid S_1, U_2] \mid A ],
    \end{align}
    where the second equality follows from lemma \ref{lemma:U2S1blockAtoS2}. Therefore, from assumption \ref{asm:completeness}, $E[f(S_1, S_2, U_2) - h(S_1, S_2) \mid S_1=s_1, U_2=u_2]$ for every $s_1, u_2$.
\end{proof}

We can now prove our main identification result.
\begin{proof}[Proof of theorem \ref{thm:identification}]
Let $f(s_1, s_2, u_2) = E[Y(s_1, s_2, u_2)]$. We have that
    \begin{align}
        &E[Y(a)] \\
        =& \sum_{s_1, s_2, u_1, u_2} E[Y(s_1, s_2, u_2) \mid S_1(a)= s_1, S_2(u_2) = s_2, U_1=u_1, U_2(a) = u_2] \\
        & \qquad \times \pr[S_1(a) = s_1, S_2(u_2) = s_2, U_1 = u_1, U_2(a) = U_2] \\
        =& \sum_{s_1, s_2, u_1, u_2} E[Y(s_1, s_2, u_2)] \times \pr[S_1(a) = s_1, S_2(u_2) = s_2, U_2(a) = U_2] \\ 
        & + \sum_{s_1, s_2, u_1, u_2} (E[Y(s_1, s_2, u_2)\mid U_1=u_1] - E[Y(s_1, s_2, u_2)])\\
        & \qquad \times \pr[S_1(a) = s_1, S_2(u_2) = s_2, U_1 = u_1, U_2(a) = U_2] \\
        =& \sum_{s_1, s_2, u_1, u_2} E[Y(s_1, s_2, u_2)] \times \pr[S_1(a) = s_1, S_2(u_2) = s_2, U_2(a) = U_2] \\
        =& E[f(S_1, S_2, U_2) \mid A = a] \\
        =& E[ E[f(S_1, S_2, U_2) \mid S_1, U_2] \mid A = a] \\
        =& E[ E[h(S_1, S_2) \mid S_1, U_2] \mid A = a] \\
        =& E[h(S_1, S_2) \mid A = a],
    \end{align}
    where 
    \begin{itemize}
        \item the first equality follows from assumptions \ref{asm:randomization} and \ref{asm:potential_outcomes},
        \item the third equality follows from assumption \ref{asm:homogeneous_treatment_effect},
        \item the fifth equality follows from the fact that $S_2 \indep A \mid S_1, U_2$ from lemma \ref{lemma:U2S1blockAtoS2},
        \item the sixth equality follows from lemma \ref{lemma:unique_bridge},
        \item the last equality follows again from lemma \ref{lemma:U2S1blockAtoS2}.
    \end{itemize}
\end{proof}

\section{Proof of the nonparametric consistency result (theorem \ref{thm:nonparametric_consistency})}

\begin{lemma}[Discretization error]\label{lemma:discretization_error}
Suppose assumption \ref{asm:bounded_range} holds. Then 
\begin{align}
    \sup_{\rho(h_1 - h_2) \leq \epsilon} (R(h_1) - \widehat{R}(h_1)) - (R(h_2) - \widehat{R}(h_2)) \leq 2 \epsilon.
\end{align}
\end{lemma}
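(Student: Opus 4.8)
The plan is to exploit that both $R$ and its empirical counterpart $\widehat R$ are, as functions of $h$, affine-plus-quadratic in \emph{linear} images of $h$, with all of the relevant ``coefficient'' factors bounded in absolute value by $1$ thanks to Assumption~\ref{asm:bounded_range}; the asserted Lipschitz-type estimate then falls out of a telescoping computation. Concretely, since the fold label is drawn independently of $(A,S,Y)$ (and in any case the conditional expectations are bounded by the range bound), the inner conditional expectations in \eqref{eq:cross-fold_risk} reduce to ordinary regressions on $A$, so $R(h) = -E\big[E[Y\mid A]\,[\Pi h](A)\big] + \tfrac12 E\big[([\Pi h](A))^2\big]$, which is affine-plus-quadratic in $\Pi h$. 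The empirical risk $\widehat R$ has the same shape, with $E[Y\mid A=a]$ replaced by an in-fold cell sample mean of $Y$ (bounded by $1$ since $|Y|\le 1$) and $[\Pi h](a)$ replaced by the in-fold and leave-fold-out within-cell sample means $[\widehat\Pi^{v}h](a),[\widehat\Pi^{-v}h](a)$ of $h(S)$ (bounded by $1$ since $\sup_{h\in\mathcal H}\|h\|_\infty\le 1$), all of which are linear in $h$.

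I would then set $g=h_1-h_2$ and expand $(R(h_1)-\widehat R(h_1))-(R(h_2)-\widehat R(h_2))$, using linearity of $\Pi,\widehat\Pi^{v},\widehat\Pi^{-v}$ for the affine parts and the identity $ab-cd=(a-c)b+c(b-d)$ for the quadratic parts. This writes the quantity of interest as a sum of four groups of increments — population-linear, population-quadratic, empirical-linear, empirical-quadratic — in each of which one factor is a linear image of $g$ (namely $\Pi g$, or an in-fold/leave-fold-out within-cell average of $g(S)$) and the other factor is a coefficient of the kind above, bounded by $1$ for the linear groups and equal to $\tfrac12$ times a sum of two such coefficients, hence again bounded by $1$, for the quadratic groups. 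By the triangle inequality the two population groups are each at most $\tfrac1K\sum_{a=1}^{K}\big|[\Pi g](a)\big|=\|\Pi g\|_{1,K}$, and the two empirical groups are each at most an average over cells and folds of $\big|[\widehat\Pi^{v}g](a)\big|$ and $\big|[\widehat\Pi^{-v}g](a)\big|$; because the fold assignment partitions each cell, these within-cell fold weights reassemble into the full empirical average, giving the bound $\|g\|_{1,N}$. Summing the four groups yields at most $2\|\Pi g\|_{1,K}+2\|g\|_{1,N}=2\rho_N(g)\le 2\epsilon$, and taking the supremum over $\{h_1,h_2:\rho_N(h_1-h_2)\le\epsilon\}$ gives the claim (the one-sided bound in the statement follows a fortiori from this two-sided one).

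This is essentially a bookkeeping lemma and needs no probabilistic input beyond the deterministic range bound of Assumption~\ref{asm:bounded_range}; the two expansions are routine. The only thing that requires care — and where I expect the minor friction to be — is matching the increments to the two summands of $\rho_N$ so that the final constant is exactly $2$: the population increments must be charged against $\|\Pi g\|_{1,K}$ and the empirical ones against $\|g\|_{1,N}$, which rests on the elementary estimate $\big|[\widehat\Pi^{v}g](a)\big|\le \tfrac{L}{n}\sum_i|g(S_i)|\,\bm{1}\{A_i=a,\,V_i=v\}$ (and its leave-fold-out analogue) together with the fact that summing these indicator weights over $(a,v)$ reproduces $\|g\|_{1,N}$ exactly. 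Pinning down the precise empirical-averaging convention in $\widehat R$ is the other small thing to get right before the accounting goes through.
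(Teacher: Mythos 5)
Your proposal is correct and follows essentially the same route as the paper's own proof: expand $(R(h_1)-\widehat R(h_1))-(R(h_2)-\widehat R(h_2))$ via linearity and the product-difference identity, bound every non-increment factor by $1$ using Assumption~\ref{asm:bounded_range}, and charge the population increments to $\lVert \Pi(h_1-h_2)\rVert_{1,K}$ and the empirical ones to $\lVert h_1-h_2\rVert_{1,N}$, yielding $2\rho_N(h_1-h_2)\le 2\epsilon$. The bookkeeping details you flag (the fold-averaging convention and reassembling within-cell fold averages into $\lVert\cdot\rVert_{1,N}$) are exactly the ones the paper's displayed computation handles.
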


\begin{lemma}[Expected finite maximum]\label{lemma:expected_finite_maximum}
Suppose that assumptions \ref{asm:bounded_range}. Then, for any $h_1,\ldots,h_m \in \mathcal{H}$, 
\begin{align}
    E \left[ \max_{j \in [m]} \left\lvert R(h_j) - \widehat R(h_j) \right\rvert \right] \lesssim \sqrt{\frac{\log(1 + m)}{N}}.
\end{align}
\end{lemma}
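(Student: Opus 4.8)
The plan is to run a standard finite-class maximal inequality; the only problem-specific step is checking that, for each fixed $h$, the empirical cross-fold risk $\widehat R(h)$ (the empirical counterpart of \eqref{eq:cross-fold_risk}, obtained by replacing each inner conditional expectation $E[\,\cdot\mid A=a,V=v]$, resp.\ $E[\,\cdot\mid A=a,V\neq v]$, by the cell–fold average $\bar Y_{a,v}$ or $\bar h_{a,v}$, resp.\ the leave-fold-out average $\bar h_{a,-v}$, and the outer expectation by the matching average over cells and folds) concentrates around $R(h)$ at the parametric rate $N^{-1/2}$. Write $Z_j:=R(h_j)-\widehat R(h_j)$. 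Under assumption \ref{asm:bounded_range} every factor appearing in $R$ and $\widehat R$ lies in $[-1,1]$, so each $Z_j$ is deterministically bounded by an absolute constant.

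First I would record that the cross-fold construction makes $\widehat R$ unbiased for $R$: within each cell $a$ the units are i.i.d.\ given $A=a$, and in every product $\bar Y_{a,v}\bar h_{a,-v}$ (and $\bar h_{a,v}\bar h_{a,-v}$) the two factors average over \emph{disjoint} sets of units, so $E[\bar Y_{a,v}\bar h_{a,-v}\mid A=a]=E[Y\mid A=a]\,E[h_j(S)\mid A=a]$, and averaging over cells returns $R(h_j)$; hence $E Z_j=0$ (a residual of order $1/N$ appears only if one does not assume balanced folds, and is harmless below). Next, view $\widehat R(h_j)$ as a function of the $N$ enriched observations $O_i=(A_i,S_i,Y_i,V_i)$: up to normalising constants it is a second-order $V$-statistic, a weighted sum over ordered pairs $(i,i')$ of units sharing a cell of terms $\pm Y_i h_j(S_{i'})$ and $\pm\tfrac{1}{2} h_j(S_i)h_j(S_{i'})$, each weight of order $1/(Kn^2)$. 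A given $O_i$ enters at most $O(n)$ such pairs (those in its own cell), so perturbing $O_i$ alone moves $\widehat R(h_j)$ by at most $c_0/N$ (using $N=nK$, $|Y|\le1$, $\|h_j\|_\infty\le1$). By the bounded-difference inequality, $Z_j$ is sub-Gaussian with variance proxy $\sum_{i=1}^N(c_0/N)^2=c_0^2/N$ about its zero mean.

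Then I would apply the textbook maximum bound $E[\max_{k\le M}W_k]\le\sqrt{2s^2\log M}$ for centered sub-Gaussian $W_k$ with proxy $s^2$ to the $M=2m$ variables $\{\pm Z_j:j\in[m]\}$ with $s^2=c_0^2/N$, obtaining $E[\max_{j\in[m]}|Z_j|]\le c_0\sqrt{2\log(2m)/N}\lesssim\sqrt{\log(1+m)/N}$, where the last inequality uses $\log(2m)\le 2\log(1+m)$ for $m\ge1$; the $O(1/N)$ bias, if retained, is $\lesssim\sqrt{\log(1+m)/N}$ as well since $N\ge1$. This is the claim.

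The hard part is the bounded-difference constant: establishing that each observation's influence on $\widehat R(h_j)$ is of order $1/N$ (not $1/K$). This uses the explicit cell–fold-aggregate form of $\widehat R$ — a unit touches only $O(n)$ of the $\asymp Kn^2$ within-cell pairs, each carrying weight $\asymp 1/(Kn^2)$ — and requires $n\ge L\ge2$ so that distinct within-cell pairs exist. One minor technicality: with cell sizes pinned to $n$ the $O_i$ are exchangeable rather than independent, but McDiarmid's inequality still applies with the same difference constants, or one runs the whole argument conditionally on the cell–fold assignments, the bound being uniform in the conditioning and passing to the marginal by the tower rule.
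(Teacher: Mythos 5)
Your proof is correct, but it takes a genuinely different route from the paper's. The paper decomposes $R(h)-\widehat R(h)$ into four terms, each of the form (empirical minus population conditional mean of one fold's factor) times a bounded function of the out-of-fold data, then conditions on the out-of-fold observations, symmetrizes with Rademacher variables, and invokes the $\Psi_2$-Orlicz maximal inequality for a finite maximum of Rademacher averages. You instead observe that the cross-fold construction makes $\widehat R(h)$ an unbiased (conditionally on the cell--fold assignment) second-order $V$-statistic over within-cell, cross-fold pairs, compute that each observation's influence is $O(1/N)$ because a unit touches only $O(n)$ of the $\asymp Kn^2$ pairs each carrying weight $\asymp 1/(Kn^2)$, and then apply McDiarmid plus the standard maximal inequality for centered sub-Gaussian variables. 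Both arguments lean on the same two structural facts --- disjointness of the folds (which kills the bias that plagues 2SLS-type empirical risks and makes the centering exact) and conditional independence of units within a cell --- and both land on the same $\sqrt{\log(1+m)/N}$ rate. Your bounded-difference route is more elementary and self-contained, and it makes the $1/N$-versus-$1/K$ influence calculation explicit, which is the real content of the lemma; the paper's symmetrization route is slightly heavier machinery but yields an intermediate bound proportional to $\max_j\bigl(\tfrac1N\sum_{i}(h_j(S_i)B_{A_i})^2\bigr)^{1/2}$, i.e., a data-dependent localization that is simply discarded under \cref{asm:bounded_range} but would be the natural starting point for sharper (variance-adaptive) rates. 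Your handling of the two technicalities --- exchangeability of the assignments (resolved by conditioning and uniformity of the difference constants) and the possible $O(1/N)$ centering residual under unbalanced folds --- is also sound.
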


\begin{proof}[Proof of lemma \ref{lemma:discretization_error}]
    Let $\widetilde y: [\mathcal{O} \to \mathbb{R}] : o = (s,a,y) \mapsto y$. We have that
    \begin{align}
        &R(h_2) - \widehat R(h_2) - (R(h_1) - \widehat R(h_1)) =\\
        & \frac{1}{2} \left\lbrace \frac{1}{K} \sum_{a=1}^K [\Pi(h_2 - h_1)](a) [\Pi h_2](a)  + \frac{1}{K} \sum_{a=1}^K [\Pi h_1](a) [\Pi (h_2 - h_1)](a) \right\rbrace\\
        & - \frac{1}{2}\left\lbrace \frac{1}{L}  \sum_{v=1}^L \frac{1}{N/L} \sum_{i:V_i = v}^N (h_2 - h_1)(S_i) [\widehat \Pi^{-v} h_2](A_i) + \frac{1}{L} \sum_{v=1} \frac{1}{N \frac{L-1}{L}} \sum_{i : V_i \neq v} [\widehat \Pi^v h_1](A_i) (h_2-h_1)(A_i) \right\rbrace \\
        & - \left\lbrace \frac{1}{K}  \sum_{a=1}^K [\Pi(h_2 - h_1)](a) [\Pi \widetilde y](a) - \frac{1}{L} \sum_{v=1} \frac{1}{N \frac{L-1}{L}} \sum_{i : V_i \neq v} (h_2 - h_1)(S_i) [\widehat \Pi^v \widetilde y](A_i) \right\rbrace.
    \end{align}
    Therefore, from the triangle inequality and assumption \ref{asm:bounded_range}, if $\rho_N(h_2 - h_1) \leq \epsilon$, 
    \begin{align}
        & \left\lvert R(h_2) - \widehat R(h_2) - (R(h_1) - \widehat R(h_1)) \right\rvert \\
        \leq & \frac{1}{2} \left\lbrace \left\lVert \Pi(h_2 - h_1) \right\rVert_{1,K}  +\left\lVert \Pi(h_2 - h_1) \right\rVert_{1,K} \right\rbrace \\
        &+ \frac{1}{2} \left\lbrace  \left\lVert h_2 - h_1\right\rVert_{1,N} + \left\lVert h_2 - h_1\right\rVert_{1,N} \right\rbrace \\
        & + \left\lbrace \left\lVert \Pi(h_2 - h_1) \right\rVert_{1,K}  + \left\lVert h_2 - h_1\right\rVert_{1,N}  \right\rbrace \\
        =& 2 \rho_N(h_2 - h_1).
    \end{align}
\end{proof}

\begin{proof}[Proof of lemma \ref{lemma:expected_finite_maximum}]
    We have that
    \begin{align}
         R(h) - \widehat R(h) = & \frac{1}{L} \sum_{v=1}^L \frac{1}{K}\sum_{a=1}^K [(\widehat{\Pi}^{-v} - \Pi) h](a) [\widehat{\Pi}^{v} \widetilde y](a) \\
         & + \frac{1}{L} \sum_{v=1}^L \frac{1}{K}\sum_{a=1}^K [ \Pi h](a) [(\widehat{\Pi}^{v} - \Pi) \widetilde y](a) \\
         &+ \frac{1}{2 L} \sum_{v=1}^L \frac{1}{K}\sum_{a=1}^K [(\widehat{\Pi}^{-v} - \Pi) h](a) [\widehat{\Pi}^{v} h](a)\\
         &+ \frac{1}{2 L} \sum_{v=1}^L \frac{1}{K}\sum_{a=1}^K [\Pi h](a) [(\widehat{\Pi}^{v} - \Pi) h](a).
    \end{align}
    Therefore, $E[\max_{j\in [m]} | R(h_j) - \widehat R(h_j)| ]$ can be bounded, up to a universal constant, by a constant (depending on $V$) number of terms of the form
    \begin{align}
        E \left[ \max_{j \in [N]} \frac{1}{N} \sum_{i \in \mathcal{B}} (h_j(S_i) - E[h_j(S_i) \mid A_i = a]) B_{A_i} \right], 
    \end{align}
    where $\mathcal{B}$ is either $\{i : V_i = v \}$ or $\{i : V_i \neq v \}$ for some $v \in [L]$, and $B_1,\ldots,B_K$ are $\{O_i : i \in [N] \backslash \mathcal{B} \}$-measurable random variables ranging in $[-1,1]$.
    
\begin{align}
    &E \left[ \max_{j \in [N]} \frac{1}{N} \sum_{i \in \mathcal{B}} (h_j(S_i) - E[h_j(S_i) \mid A_i = a]) B_{A_i} \right] \\
    =&
    E \left[E \left[ \max_{j \in [N]} \frac{1}{N} \sum_{i \in \mathcal{B}} (h_j(S_i) - E[h_j(S_i) \mid A_i = a]) B_{A_i}  \mid O_i : i \not\in \mathcal{B}, A_i \right]\right] \\
    \leq & 2\frac{1}{\sqrt{N}} E \left[E \left[ \max_{j \in [N]} \frac{1}{\sqrt{N}} \sum_{i \in \mathcal{B}} \epsilon_i h_j(S_i) B_{A_i}  \mid O_i : i \not\in \mathcal{B}, A_i \right]\right] \\
    \leq & 2 \left\lVert  \max_{j \in [N]} \frac{1}{\sqrt{N}} \sum_{i \in \mathcal{B}} \epsilon_i h_j(S_i) B_{A_i} \right\rVert_{\Psi_2 \mid O_i : i \in [N] } \\
    \lesssim &  \sqrt{\frac{\log(1 + m)}{N}} \max_{j \in [m]} \sqrt{\frac{1}{N} \sum_{i \in \mathcal{B}} (h_j(S_i) B_{A_i})^2 } \\
    \lesssim & \sqrt{\frac{\log(1 + m)}{N}},
\end{align}
where the third line follows from a symmetrization argument in which $\epsilon_1, \epsilon_2, \ldots$ are i.i.d. Rademacher random variables independent of $O_1,\ldots,O_N$, where in the fourth line $\left\lVert \cdot \right\rVert_{\Psi_2 \mid O_i : i \in [N]}$ is the Orlicz norm conditional on $O_1,\ldots,O_N$, associated to $\Psi_2(x) = \exp(x^2) - 1$, and where the before last line follows from (i) a classical result on the Orclicz norms of a finite maximum, and (ii) from another classical result on the $\Psi_2$-Orlicz norm of a Rademacher average.
\end{proof}

\begin{proof}[Proof of theorem \ref{thm:nonparametric_consistency}]
    Let $\epsilon > 0$, denote $m = N(\epsilon, \mathcal{H}, \rho_N)$, and let $h_1,\ldots, h_m$ be a an $\epsilon$-covering in $\rho_N$ norm of $\mathcal{H}$.
    From the definition of $\widehat h$, we have that
    \begin{align}
        R(\widehat{h}) - \inf_{h \in \mathcal{H}} R(h) \leq & \sup_{h \in \mathcal{H}} R(h) - \widehat{R}(h) \\
        \leq & \sup_{\rho(h-h') \leq \epsilon} \left\lvert (R(h) - \widehat R(h)) - (R(h') - \widehat R(h')) \right\rvert \\
        & + \max_{j \in [m]}  R(h_j) - \widehat R(h_j).
    \end{align}
    The first term above is bounded by $\epsilon$ from assumption \ref{asm:bounded_range}
via lemma \ref{lemma:discretization_error}, and the $L_1$-norm of the second term is bounded by a constant times $\sqrt{\log(1+N(\epsilon, \mathcal{H}, \rho_N)) / (K n)}$. Since convergence in $L_1$ norm implies convergence in probability, taking $\epsilon$
 and $1 / K$ arbitrarily small yields the claim.
 \end{proof}

\section{Proof of theorem \ref{thm:lin_JIVE_asymptotics}}
\begin{proof}
    Denoting $\widehat H_K = \sum_{a=1}^K \bar S_{a,0} \bar{S}_{a,1}$, we have that
    \begin{align}
        &\Pr[\beta \in \mathcal{C}_m] \to 1 - \alpha \\
        \iff & \Pr \left[ \frac{\widehat H_K}{\sqrt{K}}\frac{\widehat \beta - \beta}{\widehat \sigma_{\bar{\eta}} \widehat \sigma_{\bar \epsilon}} \in [\pm q_{1-\alpha / 2}]\right] \to 1-\alpha \\
        \iff & \Pr \left[N_K  \in [\pm q_{1-\alpha / 2}] \right]  \to 1-\alpha,
    \end{align}
    where 
    \begin{align}
        N_K =& N_{1,K} + N_{2,K} \\
        \text{with} \qquad N_{1,K} =& \frac{1}{\sqrt{K}} \sum_{a=1}^K \frac{\pi_a \bar{\epsilon}_{a,1}}{\sigma_{\bar{\eta}} \sigma_{\bar{\epsilon}} }  \qquad \text{and} \qquad N_{1,K} = \frac{1}{\sqrt{K}} \sum_{a=1}^K \frac{\bar{\eta}_{a,0} \bar{\epsilon}_{a,1}}{\sigma_{\bar{\eta}} \sigma_{\bar{\epsilon}} }.
    \end{align}
\end{proof}
We have $N_{1,K} = O_P(\sum_{a=1}^K \pi_a^2 / (K \sigma_{\bar \eta}^2)) = o_P(1)$ from the first part of assumption \ref{asm:instrument_strength}. 
From assumption \ref{asm:lindeberg}, Lindeberg's central limit theorem yields that $N_{2,K} \xrightarrow{d} \mathcal{N}(0,1)$, and thus $N_K \xrightarrow{d} \mathcal{N}(0,1)$, which yields the coverage claim.

For the consistency claim to hold, it suffices that $\widehat H_K / \sqrt{K \sigma_{\bar \eta}^2 \sigma_{\bar \epsilon}^2} \xrightarrow{P} \infty$. We have that
\begin{align}
    \widehat H_K =& H_{1,K} + H_{2,K} + H_{3,K},\\
    \text{where} \qquad  H_{1,K} =& \sum_{a=1}^K \pi_a^2,\\
    H_{2,K} =& \sum_{a=1}^K \pi_a (\bar{\eta}_{a,0} + \bar{\eta}_{a,1}) \\
    H_{3,K} =& \sum_{a=1}^K \bar{\eta}_{a,0} \bar{\eta}_{a,1}.
\end{align}
We have that 
\begin{align}
    \frac{H_{2,K}}{H_{1,K}} =& O_P \left( \frac{\sqrt{\sum_{a=1}^K \pi_a^2 \sigma_{\bar \eta}^2}}{\sum_{a=1}^K \pi_a^2}\right) = O_P \left( K^{-1/4} \sqrt{\frac{\sqrt{K} \sigma_{\bar \eta}^2}{\sum_{a=1}^K \pi_a^2}} \right)= o_P(1),
\end{align}
where the last equality follows from the third part of assumption \ref{asm:instrument_strength}, and 
\begin{align}
    \frac{H_{3,K}}{H_{1,K}} = O_P \left( \frac{\sigma_{\bar \eta} \sqrt{K}}{\sum_{a=1}^K \pi_a^2} \right) = o_P(1),
\end{align}
where the last equality also follows from the third part of assumption \ref{asm:instrument_strength}.
Therefore, using the second and the third part of assumption \ref{asm:instrument_strength}, we have that
\begin{align}
    \frac{\widehat H_K}{\sqrt{K} \widehat \sigma_{\bar \eta} \widehat \sigma_{\bar \epsilon}} = (1 + o_P(1)) \frac{\sum_{a=1}^K \pi_a^2}{ \sqrt{K} \sigma_{\bar \eta}^2} \frac{\sigma_{\bar{\eta}}}{\sigma_{\bar{\epsilon}}} \to \infty,
\end{align}
which yields the consistency claim.
\end{document}